\documentclass[conference]{IEEEtran}
\usepackage{cite}
\usepackage{amsmath,amssymb,amsfonts}
\usepackage{graphicx}
\usepackage{textcomp}
\usepackage{xcolor}
\usepackage{amsthm}
\usepackage{bm}
\usepackage{algorithm}
\usepackage{algpseudocode}

\def\BibTeX{{\rm B\kern-.05em{\sc i\kern-.025em b}\kern-.08em
    T\kern-.1667em\lower.7ex\hbox{E}\kern-.125emX}}
\begin{document}

\title{An Improved Linear Extractable Sketch Data Structure for Flow Count Statistics}

\author{\IEEEauthorblockN{Patthadon Tantiameorn}
\IEEEauthorblockA{\textit{Department of Computer Engineering} \\
\textit{Kasetsart University}\\
Bangkok, Thailand \\
patthadon.t@ku.th}
\and
\IEEEauthorblockN{Grittin Nuntasombat}
\IEEEauthorblockA{\textit{Department of Computer Engineering} \\
\textit{Kasetsart University}\\
Bangkok, Thailand \\
grittin.n@ku.th}
\and
\IEEEauthorblockN{Jittat Fakcharoenphol}
\IEEEauthorblockA{\textit{Department of Computer Engineering} \\
\textit{Kasetsart University}\\
Bangkok, Thailand \\
jittat@gmail.com}
}

\newcommand{\buck}{\mathcal B}
\newcommand{\Zp}{\mathbb{Z}_p}

\newtheorem{theorem}{Theorem}
\newtheorem{lemma}{Lemma}

\IEEEoverridecommandlockouts

\IEEEpubid{\makebox[\columnwidth]{\ \ \hfill} \hspace{\columnsep}\makebox[\columnwidth]{ }}

\maketitle

\begin{abstract}
Sketch data structures are very useful for computing statistics on streaming data, including network traffic, server requests, and financial transactions.   In recent work, FermatSketch was introduced as an underlying data structure used to monitor changes in network states.  It is a linear data structure that maintains an associated array of counters and supports listing all key-counter pairs while using almost linear space.  Because it is linear, it can be used to monitor changes between two streams with space proportional to the number of items that change.  The data structure is based on a hash table, and all key-counter pairs can be successfully listed when there are slots in the table with exactly one key hashed to them.  We show how to relax this requirement by using additional computational resources when listing the key-counter pairs, thereby improving space efficiency with only a small overhead when collecting statistics.   We achieve this by storing, for each bucket, multiple linear combinations of the counters whose coefficients are generated from the keys.  With this information, certain linear systems can be solved to obtain the key-counter pairs.  A preliminary experiment shows a significant reduction of memory needed for the data structure.  Our work can be viewed as a trade-off between space and time.
\end{abstract}

\begin{IEEEkeywords}
Sketches, data structures, linearity, network measurement
\end{IEEEkeywords}

\section{Introduction}

Data structures that use memory efficiently are crucial for system measurements, as equipment storages are typically limited and hard to expand, unlike that of personal computing devices.

In this paper, we consider a data structure design problem in a network monitoring setting where we would like to track the packet losses.  
While standard techniques (see e.g.,~\cite{LiMKY16-flowradar,LiMKY16-lossradar}) may require space proportional to the total traffic or proportional to the number of distinct flows, the FermatSketch data structure proposed by Kaicheng~{\em et al.}~\cite{ChameleMon23} uses space only proportional to the number of distinct lost flows (which can be much smaller than the total number of flows).  

The crucial design decision of the FermatSketch is that it is a linear data structure; therefore, the difference of two sketches is still a sketch and when lost flows are considered, the difference sketch has non-zero counters only for the lost flows (referred to later as victim flows), see Figure~\ref{fig:interconnect}.
Moreover, this implies that the data structures only need the space large enough to keep the victim flows.
The FermatSketch is the key to the design of ChameleMon~\cite{ChameleMon23}, a network measurement system that supports flow-level measurement, and is also an underlying data structure in the DaVinci Sketch~\cite{Wang25-DaVinci}.  
The FermatSketch data structure follows the idea of invertible bloom lookup tables~\cite{GoodrichM11-ibt,MizrahiBLYR23-ibt-guarantees,FleischhackerLOS-ESA24-ibt-lessmem} introduced by Goodrich and Mitzenmacher~\cite{GoodrichM11-ibt} with a crucial modification to handle count statistics.

\begin{figure}
\begin{center}
\includegraphics[width=0.4\textwidth]{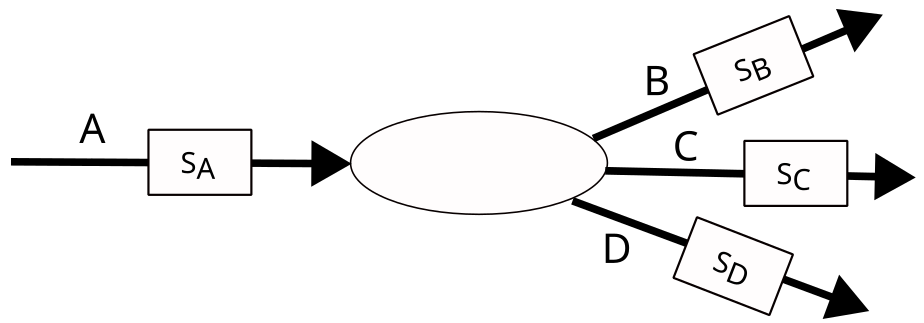}
\end{center}
\caption{We can use linear sketch data structures to track lost flows.  
Linear sketches $S_A,S_B,S_C,$ and $S_D$ track flows on links $A,B,C,$ and $D$, respectively.  To compute lost flows, we can find the difference $S_A-(S_B+S_C+S_D)$.}
\label{fig:interconnect}
\end{figure}

In this work, we present an improvement on the space usage of FermatSketch by a flow extraction procedure that works on relaxed conditions.  In turn, the flow extraction procedure may run much slower.  This can be viewed as a space-time trade-off for data structures that maintain flow counters and support flow counter listing, like the FermatSketch.  We note that while it takes longer to extract flow statistics, we believe that the compute needed can be managed much more efficiently and effectively in centralized infrastructure than the limited memory available on networking devices which are more distributed.

Our main idea is to let each bucket keep multiple linear combinations of the values stored as in the FermatSketch.  
The linear combinations are formed using coefficients that are generated from the keys.
This information is enough to solve the inverse problem using simply Gaussian elimination, if the coefficients are known.  We design the procedure so that the coefficients are selected from a small set of values; thus, by brute force searching through all possible coefficient choices, we can find the correct coefficients and solve the linear system.  
A preliminary experiment presented in Section~\ref{sect:experiments} shows at least $5.8\%$ reduction in memory usage for the $99\%$ accuracy level.   When the number of lost flows is small, the advantage is more significant (i.e., when the number of lost flows is $100$, we get $34.9\%$ reduction in memory usage).  As our data structure can be parameterized to use more time, we believe that we can even further reduce the memory usage.

We formally describe the problem in Section~\ref{sect:prelim}.  The key previous work, the FermatSketch data structure~\cite{ChameleMon23} is also reviewed in that section.  We describe our improvement in Section~\ref{sect:algorithms} and analyze its correctness and its running time requirements in Section~\ref{sect:analysis}.  
Finally, Section~\ref{sect:experiments} shows the experimental results.

\subsection{Related Work}

Sketch data structures~\cite{cormode23podssurvey} enable memory efficient building blocks for many applications.  
Among them, CountMin Sketch~\cite{cormodeM05countmin}, CountSketch~\cite{charikarCFC04countsketch}, and varieties of Bloom filters~\cite{bloom70,GoodrichM11-ibt,MizrahiBLYR23-ibt-guarantees,FleischhackerLOS-ESA24-ibt-lessmem} are widely used in practice.

For network measurements, sketches have been used to track flow sizes~\cite{cormodeM05countmin,charikarCFC04countsketch}, heavy hitters~\cite{cormodeM05whatshot}, and flow size distributions~\cite{yang18elastic,kaicheng25towersketch}.
Sketches have also been used to track packet losses~\cite{LiMKY16-lossradar,LiMKY16-flowradar,ChameleMon23}.  See survey in~\cite{hanYJP22sketchnetworksurvey}.

\section{Problem statement}
\label{sect:prelim}

Here we describe the requirement for the sketch data structure, abstracted from the interface of FermatSketch from~\cite{ChameleMon23}, and review its implementation.  

We are given a multiset $F$ of flow ids $F=\{f_1,f_2,\ldots,f_n\}$ and we would like a data structure that reports a count statistics, i.e., it returns a list $(f,c_f)$ for each flow $f\in F$ where $c_f$ is the number of times $f$ appears in $F$.  

We also want the data structure to be linear.  As an example, consider a network configuration in Figure~\ref{fig:interconnect} where traffics come through link $A$, go though an interconnect, and exit through links $B, C,$ and $D$.  We can keep the flow count statistics for each link as a linear data structure: $S_A, S_B, S_C,$ and $S_D$.  
To find the lost flows, we can compute 
\[
S_A - (S_B + S_C + S_D),
\]
and extract the flow count statistics from this sketch.

We shall design a data structure $S$ that maintains a counter for each flow (using the flow id $f$ as a key).
It should support the following operations:
\begin{itemize}
\item $addFlow(S,f)$ -- adds $1$ to the counter for flow id $f$
\item $extract(S)$ -- returns the list of tuples $(f,c_f)$ where $f$ is the flow id and $c_f$ is the counter for flow id $f$.
\end{itemize}
We also want the data structure to be {\em linear}.  
More specifically, if we have data structures $S$ and $S'$ for input multisets $F$ and $F'$, 
the data structure for multiset $a\cdot F' +  b\cdot F$, where $a\cdot F$ represents a multiset whose elements from $F$ are duplicated for $a$ times, is just $a\cdot S'+b\cdot S$.  In particular, we can compute $S'-S$ such that
for each flow $f$ with counter $c$ in $S$ and $c'$ in $S'$ where $c\neq c'$, the sketch $S'-S$ keeps the counter of $c'-c$ for flow $f$.  
This is extremely crucial as the sketch is used to keep counter differences.

Since the data structure is probabilistic, we are interested in the case where we can successfully extract all the flow counters.

\subsection{Review of FermatSketch}

FermatSketch maintains $d$ equal-sized bucket arrays $\buck_1,\buck_2,\ldots,\buck_d$, each of size $m$.  
There is also a pairwise-independent hash function $h_i$ for each array $\buck_i$.
Each bucket $\buck_i[j]$ in the array $\buck_i$ has two components: the counter $\buck_i^{c}[j]$ and the ID sum $\buck_i^{id}[j]$.  The data structure also specifies a prime $p$, which is larger than any flow id's and any counter values.   Every computation on this sketch would be done modulo $p$; thus, each memory item in the data structure should be large enough to keep an integer as large as $p$.  We let $\Zp$ denote the set of integers modulo $p$.

We define a more general procedure $addFlow(S,f,c)$.  When it is called we do the following for each bucket array $i$, $i=1,\ldots,d$
\begin{itemize}
\item $\buck_i^c[h_i(f)] = (\buck_i^c[h_i(f)] + c) \bmod p$
\item $\buck_i^{id}[h_i(f)] = (\buck_i^{id}[h_i(f)] + c\cdot f) \bmod p$
\end{itemize}
When we perform this updated, we say that flow $f$ is {\em hashed} into this bucket $\buck_i[h_i(f)]$.
To implement $addFlow(S,f)$, we simply call $addFlow(S,f,1)$.

We say that bucket $\buck_i[j]$ is {\em pure} if there exists a unique flow $f$ that is hashed into $\buck_i[j]$
The key primitive operation for FermatSketch is the following pure bucket check.  
\begin{itemize}
\item $isPure(\buck_i[j])$ -- checks if $\buck_i[j]$ is pure..
\end{itemize}
Note that if the bucket is pure, i.e., there exists a single flow $f$ we should have that
\[
\buck_i^{id}[j] = f\cdot \buck_i^{c}[j]  \pmod p
\]
Solving this modular equation for $f$, we can obtain the candidate flow id $f$.  
Kaicheng {\em et al.}~\cite{ChameleMon23} uses the Fermat Little Theorem to solve this equation, i.e., they use theorem to find the modular multiplicative inverse of $\buck_i^{c}[j]$ modulo $p$; this inspires the name of the sketch.
After the candidate flow $f$ is obtained, we can verify that $f$ is the actual flow id by testing if $j=h_i(f)$, i.e., bucket $j$ is the possible bucket for flow $f$.  
This pureness test can be incorrect, i.e., a bucket may contain multiple flows, but the incorrect flow id $f$ obtained actually satisfies $h_i(f)=j$.  This occurs with probability $1/m$.  In~\cite{ChameleMon23}, the authors discussed this issue and stated a theorem (Theorem~3.1~in~\cite{ChameleMon23}) that if the total number of buckets $md$ is of the same order of the number $M$ of victim flows and $M$ is not too small, the extraction procedure succeeds with high probability.  They also describe a fingerprint verification (in Appendix A.4) to reduce the false positive rate, but this requires more memory.

To implement $extract(S)$, referred to as the decoding procedure in~\cite{ChameleMon23}, one could perform the pure bucket check for all non-zero buckets $\buck_i[j]$.  If the bucket is pure with flow $f$, we can take the count $c=\buck_i[j]$ and call $addFlow(S,f,-c)$ to ``remove'' flow $f$ from the sketch.  One could repeat this procedure until all buckets are zero or no pure buckets are left.  In the former case, we say that we have {\em successfully extracted} all the flows.

We remark that the FermatSketch is linear as all bucket components $\buck_i^c[j]$ and $\buck_i^{id}[j]$ are linear.

\section{The Improved Sketch Components}
\label{sect:algorithms}

In FermatSketch, pure buckets are starting points for flow extraction.  However, when the ratio betwen the number of flows and the number of buckets is relatively high, the number of non-pure buckets can get too large, leading to flow extraction failures.  Our key observation is that when the number of flows hashed into a bucket is a small constant (not necessarily $1$), we might still be able to recover all the flow ids.  

We start with an illustrative example.
When two flows $f_1$ and $f_2$ are hashed into the same bucket, we do not have enough information to recover their counters.  However, if we keeps 4 values:
\[
c_1 + c_2, \ \ \ \ 
c_1 + 2 c_2,
\]
\[
c_1\cdot f_1 + c_2\cdot f_2, \ \ \ \ 
c_1\cdot f_1 + 2 c_2 \cdot f_2,
\]
where $c_1$ and $c_2$ are counters for $f_1$ and $f_2$, we can solve for all the needed values: $f_1,f_2,c_1,$ and $c_2$.  With this idea, we hope that by keeping more values (as independent linear combinations) in each bucket, we can deal with higher numbers of hash collisions.  

With this advantage, by storing $k$ values in a bucket, we hope to extract $k$ flows from a bucket.
To keep the total space intact, we can reduce the number of buckets by a factor of $k$ as well.
Our intuition is that the increase in the average load factor would in turn reduce the ``variance'' on the number of pure buckets, resulting in the increase of the number of flows in pure buckets.
We perform a preliminary experiments to investigate this idea.  Section~\ref{sect:exp-prelim} presents the results.  

The challenge is how to make this idea works while ensuring that the data structure remains linear.

Our key idea is to add a ``random'' flow-dependent coefficient to each value added to the bucket.  
Let $k$ be a small constant, representing the maximum number of flows we can accept in a pure bucket.
We modify each bucket $\buck_i[j]$ to have $2k$ slots of two types:
\begin{itemize}
    \item the counters $\buck_i^c[j][1],\buck_i^c[j][2],\ldots,\buck_i^c[j][k]$
    \item the ID sums $\buck_i^{id}[j][1],\buck_i^{id}[j][2],\ldots,\buck_i^{id}[j][k]$
\end{itemize}
We also pick $k$ random hash functions $g_1,g_2,\ldots,g_k$ that map flow ids to $\Zp-\{0\}$, the set of non-zero integers less than $p$.  We also require that the range of each $g_i$ is of size $L$ and all ranges are disjoint.  The parameter $L$ will be determined later.

In our implementation, we let $g_i$ return values from a set of $L$ prime numbers in $\Zp$.  This is slightly different from the formal analysis in Section~\ref{sect:full-rank-analysis}.

When $addFlow(S,f,c)$ is called, we do the following for each bucket array $i$, $i=1,\ldots,d$, and for each index $r=1,\ldots,k$
\begin{itemize}
\item $\buck_i^c[h_i(f)][r] = (\buck_i^c[h_i(f)][r] + c\cdot g_r(f)) \bmod p$
\item $\buck_i^{id}[h_i(f)][r] = (\buck_i^{id}[h_i(f)][r] + c\cdot g_r(f)\cdot f) \bmod p$
\end{itemize}

With this modification, we shall analyze the values kept in a bucket $\buck_i[j]$ to devise a procedure for flow extraction.  
We start with a single bucket $\buck_i[j]$ with $k$ slots.  
Following~\cite{ChameleMon23}, we say that $\buck_i[j]$ is {\em pure} if there are at most $k$ flows hashed into it.  

Suppose that bucket $\buck_i[j]$ is pure and it has $k$ flows $f_1,f_2,\ldots,f_k$ hashed into it, with counters $c_1,c_2,\ldots,c_k$.  We first focus on the counters.
For each $r=1,\ldots,k$, we have
\[
\buck_i^c[j][r] = \sum_{t=1}^k c_t \cdot g_r(f_t) \pmod p
\]
Treating all $g_r(f_t)$'s as known, we have a linear system with $k$ equations and $k$ variables $c_1,c_2,\ldots,c_k$, as follows:
\[
\begin{bmatrix}
g_1(f_1) & g_1(f_2) & \cdots & g_1(f_k) \\
g_2(f_1) & g_2(f_2) & \cdots & g_2(f_k) \\
\vdots & \vdots & \ddots & \vdots \\
g_k(f_1) & g_k(f_2) & \cdots & g_k(f_k)
\end{bmatrix}
\begin{bmatrix}c_1 \\ c_2 \\ \vdots \\ c_k
\end{bmatrix}
\equiv
\begin{bmatrix}
b_1 \\
b_2 \\
\vdots \\
b_k
\end{bmatrix}\pmod p
\]
where $b_r=\buck_i^c[j][r]$ for $r=1,\ldots,k$.  The coefficient matrix, denoted by $M$, is a $k\times k$ matrix where the entry at row $r$ and column $t$ is $g_l(f_t)$.  If $M$ is full-rank, we can solve for all $c_t$'s.  
Similarly, we can also form a linear system to solve for all $f_t$'s using the ID sums $\buck_i^{id}[j][r]$'s, given that we already know all $c_t$'s.

To do this, first solve for ${\mathbf y}=[y_1,y_2,\ldots,y_k]^T$ such that
\[
M {\mathbf y} \equiv {\mathbf b'} \pmod p,
\]
where ${\mathbf b'}=[b'_1,b'_2,\ldots,b'_k]^T$ and $b'_r=\buck_i^{id}[j][r]$ for $r=1,\ldots,k$.  Since we already know all $c_t$'s, we can compute $f_t=y_t/c_t$ modulo $p$ for each $t=1,\ldots,k$.  Thus, the key to our approach is to ensure that the matrix $M$ is full-rank.  We shall analyze the probability of this event in Section~\ref{sect:full-rank-analysis}.

The catch is that we do not know the matrix $M$. 
However, we do know that each entry in row $r$ is from the range of $g_r$ which is of size $L$.  
Thus, we can try all possible choices for the matrix $M$ and solve the corresponding linear systems.  Since there are $L$ choices for each entry in $M$, there are $L^{k^2}$ possible choices for $M$.  For each choice of $M$, we can solve the two linear systems to obtain candidate values for $c_t$'s and $f_t$'s.  We can verify if these values are correct by checking if $j=h_i(f_t)$ for each $t=1,\ldots,k$.  If this holds, we can call $addFlow(S,f_t,-c_t)$ for each $t=1,\ldots,k$ to remove these flows from the sketch.  We can repeat this procedure until no more buckets can be extracted.

Procedure for testing bucket pureness and extracting $k$ flows from a pure bucket $\buck_i[j]$ is describe in Algorithm~\ref{fig:extract-k-flows}.

\begin{algorithm}
\caption{$extractBucket(S, \buck_i[j])$ \\ 
$\triangleright$ $extractBucket$ tests bucket pureness and extracts $k$ flows from a pure bucket $\buck_i[j]$.}
\label{fig:extract-k-flows}
\begin{algorithmic}
\State ${\mathcal M} \leftarrow$ all possible $k\times k$ matrices where the entry at row $r$ is from the range of $g_r$
\ForAll{$M\in{\mathcal M}$}
    \If{$M$ is not full-rank (modulo $p$)}
        \State continue
    \EndIf

    \State $\triangleright$ {\em solve the two linear systems}
    \State ${\mathbf b}\leftarrow [b_1,b_2,\ldots,b_k]^T$ where $b_r=\buck_i^c[j][r]$
    \State ${\mathbf b}'\leftarrow [b'_1,b'_2,\ldots,b'_k]^T$ where $b'_r=\buck_i^{id}[j][r]$
    \State Solve for ${\mathbf c}$ such that $M {\mathbf c} \equiv {\mathbf b} \pmod p$
    \State Solve for ${\mathbf y}$ such that $M {\mathbf y} \equiv {\mathbf b'} \pmod p$
    \For{$t=1$ to $k$}
        \If{$c_t\neq 0$}
            \State $f_t\leftarrow y_t/c_t \pmod p$
        \Else 
            \State $f_t\leftarrow null$
        \EndIf
    \EndFor

    \State $\triangleright$ {\em verify the solution}
    \If{$h_i(f_t)=j$ for $t=1,\ldots,k$ when $f_t\neq null$}
        \State \Return $\{(f_1,c_1),(f_2,c_2),\ldots,(f_k,c_k)\}$
    \EndIf
\EndFor
\State \Return ``fail''
\end{algorithmic}
\end{algorithm}

\section{Analysis}
\label{sect:analysis}
\subsection{Correctness}
There are two properties that we need to ensure.  
First, the linearity of the sketch.
This is fairly straight-forward since after all hash functions are chosen, all coefficients $g_i(f)$ is fixed for each flow $f$.  Thus, invoking $addFlow(S,f,c_1)$ after $addFlow(S,f,c_2)$ is equivalent to calling $addFlow(S,f,c_1+c_2)$, for any $c_1$ and $c_2$.

Second, we need to analyze the success probability of the extraction procedure.
This depends on two factors: (1) how flow hashing works, and given that a bucket contains at most $k$ flows, (2) the probability that the corresponding linear system is solvable.  We perform experiments to demonstrate the first factor in Section~\ref{sect:experiments}.  The second factor can be analyzed more regorously, as we shall do in the next subsection, Section~\ref{sect:full-rank-analysis}.  
We note that the solvability probability can be made arbitrarily close to $1$ by increasing $L$.  However, this drastically increases the running time.  Section~\ref{sect:running-time-analysis} provides the analysis.

\subsection{Analysis of the solvability probability}
\label{sect:full-rank-analysis}
We shall analyze the probability that the modular linear system from our algorithm is solvable.  Recall that we use modular arithematics and perform every operation modulo a large prime $p$.  Let $k$ denote that size of the matrix and also let $L$ be the number of coefficient choices.  

The random coefficients are from set of numbers $P_1,P_2,\ldots,P_k$ where each $P_i$ is a set of $L$ from $\Zp-\{0\}$, the set of non-zero integers less than $p$, and all $P_i$'s are disjoint.  Consider a $k\times k$ matrix $M$ where each entry on row $i$ is chosen uniformly from $P_i$.  

In our proof, we require a certain assumption related to the solution of a certain linear system modulo $p$ below.

\noindent
{\bf Assumption.}  The probability that a solution $\alpha_1,\alpha_2,\ldots,\alpha_j$ to a certian random linear system modulo $p$ with $j$ variables satisfies (1) at least two $\alpha_i$ and $\alpha_{i'}$ are nonzero and (2) $\alpha_1+\alpha_2+\cdots+\alpha_j \equiv 1$ modulo $p$ is neglegible.

\begin{lemma}
For $k>1$, under the above assumption,
the probability that $M$ is full-rank is at least $1-{k\choose 2}/L^k-2^k/(p-k^2)$.
\end{lemma}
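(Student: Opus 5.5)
We will build $M$ one column at a time and apply a union bound over the events in which a new column falls into the span of the columns already chosen. Let $v_1,\dots,v_k$ be the columns of $M$. Entry $r$ of each column is drawn independently and uniformly from $P_r$, and different columns are independent. $M$ is singular exactly when, for some $j\le k$, the column $v_j$ lies in the span of $v_1,\dots,v_{j-1}$ while those earlier columns are independent. We therefore bound the probability of the failure event $E_j$ at step $j$ and sum over $j$.

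The failure events split into two kinds, which should account for the two terms in the bound. In the first kind, $v_j$ is a scalar multiple $\lambda v_i$ of a single earlier column. Since the $P_r$ are disjoint and all entries are nonzero, we will argue that the only multiple that can occur with non-negligible probability is $\lambda=1$. In that case we need $v_j=v_i$ coordinatewise, which happens with probability exactly $1/L^k$ because each of the $k$ coordinates must match a fixed value drawn uniformly from a set of size $L$. A union bound over the $\binom{k}{2}$ pairs $i<j$ then gives the term $\binom{k}{2}/L^k$.

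If $\lambda\ne 1$ is to be handled rigorously, we will need a small extra argument. One option is to note that $\lambda v_i$ must have its $r$-th entry in $P_r$ for every $r$. The cleaner route is to fold the $\lambda\ne1$ case into the second kind of failure, below.

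In the second kind, $v_j=\sum_{i<j}\alpha_i v_i$ with at least two $\alpha_i$ nonzero, or with a single $\alpha_i\neq 1$. The plan is to use the first coordinate, or a normalization, to write this as a linear system in the unknowns $\alpha$. Then we split according to whether $\sum\alpha_i\equiv 1$.

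If $\sum\alpha_i\equiv1$ and at least two $\alpha_i$ are nonzero, we are exactly in the situation of the stated Assumption, and we declare this case negligible.

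Otherwise, we fix the support pattern $S\subseteq\{1,\dots,j-1\}$ of the $\alpha$'s, of which there are at most $2^{j-1}$. Conditioned on $v_1,\dots,v_{j-1}$, the coefficients $\alpha$ are determined by any $j-1$ coordinates of $v_j$ against the earlier columns. The remaining coordinate $r$ of $v_j$ must then equal a specific value determined by the other entries. That coordinate is uniform over $P_r$, so the probability is at most $1/L$. To reach the stated denominator $p-k^2$, however, we instead treat that entry as hitting one specified residue out of the roughly $p-k^2$ residues not excluded by the $k^2$ already-fixed entries (using disjointness and nonzeroness). This gives probability at most $1/(p-k^2)$ per support pattern. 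Summing $2^{j-1}$ over $j\le k$ yields at most $2^k/(p-k^2)$.

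Combining the two kinds of failure by a union bound gives the claimed bound $1-\binom{k}{2}/L^k-2^k/(p-k^2)$.

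The main obstacle is the second kind of failure. We must justify that the determined entry behaves like a near-uniform residue modulo $p$ rather than merely over a set of size $L$, and this is precisely the gap that the Assumption is meant to absorb. I would first try to condition carefully so that the unknown $\alpha$ system involves only the first $j-1$ rows, isolating the last row's entry. Then I would argue via the Assumption that only the non-affine case ($\sum\alpha_i\not\equiv1$) remains, where a counting argument over residues gives the $1/(p-k^2)$ factor.
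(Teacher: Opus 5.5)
Your outline tracks the paper's proof almost step for step: the $\binom{k}{2}/L^k$ term from equal columns, a column-by-column union bound over $2^{j-1}$ patterns, the Assumption discarding $\sum_i\alpha_i\equiv 1$, and $1/(p-k^2)$ per pattern. The gap is that the step producing $1/(p-k^2)$ is not an argument, and the Assumption does not supply it.

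Condition on $v_1,\dots,v_{j-1}$ and on the entries of $v_j$ in a set $R$ of $j-1$ rows where the earlier columns restrict to a nonsingular matrix. Such an $R$ exists, but not ``any'' $j-1$ rows will do. Then $\alpha$ is fixed, and for each row $r\notin R$ the target $\sum_i\alpha_i v_i[r]$ is fixed while $v_j[r]$ is uniform on the fixed $L$-element set $P_r$. The honest conclusion is probability at most $L^{-(k-j+1)}$. That gives the valid but much weaker bound $\Pr[M\text{ singular}]\le\sum_{j=2}^{k}L^{-(k-j+1)}$, which is of order $1/L$. The Assumption only concerns the event that at least two $\alpha_i$ are nonzero and $\sum_i\alpha_i\equiv 1$. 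It says nothing about entries being spread over $p-k^2$ residues.

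For fixed sets the claimed bound can in fact fail. Take $k=2$, $P_1=\{1,2,3,4\}$, $P_2=\{5,6,7,8\}$ and a prime $p>516$. With first row $(a,b)\in P_1^2$ and second row $(c,d)\in P_2^2$, $M$ is singular iff $ad=bc$, since $|ad-bc|\le 27<p$. This has the $16$ solutions with $a=b$, $c=d$, plus $(a,b,c,d)=(3,4,6,8)$ and $(4,3,8,6)$. So $\Pr[M\text{ singular}]=18/256>1/16+4/(p-4)$, and the Assumption is vacuous here because there is only one $\alpha$. The two extra solutions are $v_2=\frac{4}{3}v_1$ and $v_2=\frac{3}{4}v_1$, so your claim that only $\lambda=1$ matters also fails.

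The paper makes exactly the same move when it says ``$C_i[k]$ is chosen from a set of size at least $p-1-(k^2-1)$''. That only makes sense under an unstated model in which the elements of each $P_r$ are themselves random residues, so an entry differing from the at most $k^2-1$ other revealed entries is a fresh uniform value. You would have to add that hypothesis explicitly, not attribute it to the Assumption.

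Even granting that model, enumerating the support of $\alpha$ does not isolate the unknown entry. If $v_i[r]=v_j[r]$ for some $i$ with $\alpha_i\neq 0$, your ``specific value determined by the other entries'' depends on $v_j[r]$ itself. The paper instead enumerates the set $I$ of earlier columns whose last-row entry differs from $C_i[k]$. It moves the coinciding terms to the left and solves $C_i[k]\equiv\sum_{j\in I}\alpha_jC_j[k]/(1-\sum_{j\notin I}\alpha_j)$. The excluded case $I=\emptyset$ is exactly $\sum_j\alpha_j\equiv 1$, which is where the Assumption enters. One still needs $\sum_{j\notin I}\alpha_j\not\equiv 1$ for nonempty $I$, which the paper does not check.
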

\begin{proof}
Denote $k$ columns of $M$ as $C_1,C_2,\ldots,C_k$.  We first consider the probability that $M$ is singular due to two columns $C_i$ and $C_j$ being equal.  The probability that this occurs is $1/L^k$.   Since there are ${k\choose 2}$ pairs of columns, using the union bound, the probability that $M$ is singular due to this reason is at most ${k\choose 2}/L^k$.

Assuming that no two columns are equal, we now consider the probability that column $C_i$ is a linear combination of columns $C_1,C_2,\ldots,C_{i-1}$, given that columns $C_1,\ldots,C_{i-1}$ are linearly independent.  

We analyze a linear system over $\Zp$ modulo $p$ with $i-1$ variables $\alpha_1,\alpha_2,\ldots,\alpha_{i-1}$ such that
\[
\alpha_1 C_1 + \alpha_2 C_2 + \cdots + \alpha_{i-1} C_{i-1} \equiv C_i \pmod p
\]
Since there are $k$ rows and $i-1 < k$, the number of equations is more than the number of variables.  
Since $C_1,C_2,\ldots,C_{i-1}$ are linearly independent, there exists a set of rows $R=\{r_1,r_2,\ldots,r_{i-1}\}$ such that the submatrix $M'$ of $M$ formed by rows in $R$ and columns $C_1,C_2,\ldots,C_{i-1}$ is full-rank.  
Without loss of generality, we assume that $R=\{1,2,\ldots,i-1\}$, and define $C'_j$ to be the columns $C_j$ restricted to rows in $R$.  
The linear subsystem 
\[
\alpha_1 C'_1 + \alpha_2 C'_2 + \cdots + \alpha_{i-1} C'_{i-1} \equiv C'_i \pmod p
\]
with $i-1$ equations and $i-1$ variables has a unique solution $\alpha_1,\alpha_2,\ldots,\alpha_{i-1}$, since $C'_1,C'_2,\ldots,C'_{i-1}$ are linearly independent.

We now consider the probability that this solution also satisfies the remaining equations, especially the equation w.r.t the last row $k$, i.e., we want to analyze the probability that
\[
\alpha_1 C_1[k] + \alpha_2 C_2[k] + \cdots + \alpha_{i-1} C_{i-1}[k] \equiv C_i[k] \pmod p,
\]
where $C_j[k]$ is the entry of column $C_j$ at row $k$.  Recall that every entry $C_j[k]$ is chosen uniformly from the same set $P_k$.  

We first show that conditioning on the fact that at least one entry $C_j[k]$ is different from $C_i[k]$, the probability that the above equation holds is at most $2^{i-1}/(p-k^2-1)$.  For a nonempty subset $I\subseteq \{1,\ldots,i-1\}$, consider the case where $C_j[k]\neq C_i[k]$ for all $j\in I$, and $C_j[k]=C_i[k]$ for all $j\notin I$.  
Since $\alpha_j$'s are fixed, assume that all $\alpha_j$'s for $j\in I$ are already chosen, there is only one value for $C_i[k]$ that satisfies the above equation, i.e.,
\[
C_i[k] \equiv \frac{\sum_{j\in I} \alpha_j C_j[k]}{1-\sum_{j\not\in I}\alpha_j} \pmod p.
\]
Since $C_i[k]$ is chosen from a set of size at least $p-1-(k^2-1)=p-k^2$, the probability that it takes this value is at most $1/(p-k^2)$.  Taking the union bound over all subsets $I$, we have that the probability that the equation holds is at most $2^{i-1}/(p-k^2)$.

To see why we can ensure the conditioning, first of all, we can discard the case where exactly one $\alpha_j=1$ and all other $\alpha_{j'}=0$ for $j'\neq j$, since this would imply that $C_i=C_j$, contradicting our assumption.  
We are left with two cases, either there is exactly one nonzero $\alpha_j$ and $\alpha_j\neq 1$, or there are at least two nonzero $\alpha_j$'s.  
In the first case, since $\alpha_j\neq 1$, we know that $C_j[k]\neq C_i[k]$.  
In the second case, by our assumption, the probability that $\sum_{j}\alpha_j \equiv 1$ modulo $p$ is negligible; thus, we may assume that $\sum_{j}\alpha_j \not\equiv 1$ modulo $p$, implying that there must be at least one $j$ such that $C_j[k]\neq C_i[k]$ as needed.

Combining all bad events for columns $C_2,C_3,\ldots,C_k$, using the union bound, the probability that $M$ is singular due to this reason is at most $\sum_{i=2}^k 2^{i-1}/(p-k^2) < 2^k/(p-k^2)$, as desired.  
The lemma thus follows.
\end{proof}

We note that since $p$ is very large and $k$ is a very small constant, the term $2^k/(p-k^2)$ is also negligible; the key factor to solvability is $L$.  Section~\ref{sect:exp-prelim} presents some preliminary experiments to verify this analysis.

\subsection{Running time analysis}
\label{sect:running-time-analysis}

Assume that all hash functions are evaluated in $O(1)$ time. 
The running time for $addFlow(S,f,c)$ is $O(d\cdot k)$; since both numbers are small constants, this is effectively $O(1)$ time.

The running time for flow extraction is dominated by function $extractBucket()$, which solves linear systems of size $k\times k$ many times.
For a particular parameter $L$, there are at most $L^{k^2}$ choices for the coefficient matrix $M$.
Testing if it is full rank can be done in $O(k^3)$ time.  
In fact, we can find the inverse $M^{-1}$ modulo $p$ in that time as well; therefore, solving the two linear systems can be done later in $O(k^2)$ time.  Thus, the total running time for $extractBucket()$ is 
\[
O(k^3 L^{k^2}).
\]
To fully extract all flows, we need to call $extractBucket()$ successfully $md$ times.
By maintaining a list of successfully extracted buckets, and only calling $extractBucket()$ on affected buckets after each successful extraction, we can reduce the number of calls to $O(md)$ time.
Let denote the time function $extractBucket()$ is called as $N$.
In the worst case, the total running time for successful flow extraction is
\[
O(k^3 L^{k^2}\cdot N),
\]
where $N=O(md)$.  We remark that the factor of $N$ also appears in the original FermatSketch~\cite{ChameleMon23}, but since $k=1$, $extractBucket$ runs in $O(1)$ time.  Hence, the overhead of our approach is the additional multiplicative factor of $k^3 L^{k^2}$.

\section{Experiments}
\label{sect:experiments}

\subsection{Preliminary experiments}
\label{sect:exp-prelim}

We perform preliminary experiments to investigate the idea of using $k$ slots per bucket.
The first experiment verifies that increasing $k$ while reducing the number of buckets by the same factor to maintain the same space leads to improved percentage of flows in pure buckets.

We set the total number of slots to be $m\cdot k=6000$ and vary $k$ from $1$ to $4$.  
Note that the original FermatSketch corresponds to $k=1$.  We vary the number of flows $n$ from $300$ to $6000$, and measure the percentage of flows in pure buckets after all flows are added.  
The results shown in Figure~\ref{fig:exp-pure-percent-stat} are averaged over $100$ trials.
One can see clear advantages even when $k=2$.

\begin{figure}
\begin{center}
\includegraphics[width=0.4\textwidth]{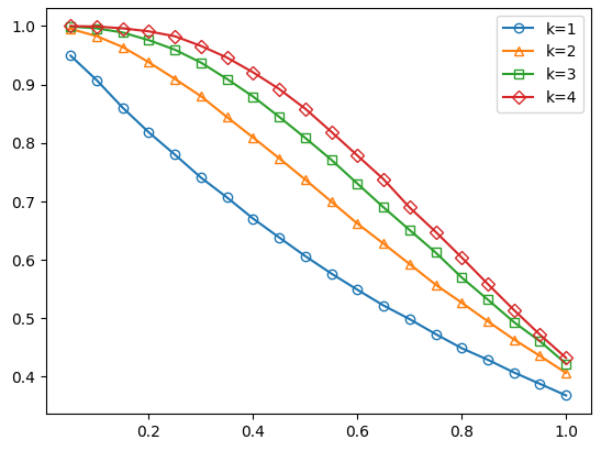}
\end{center}
\caption{The percentage of flows in pure buckets for $k=1,2,3,4$, for varying ratios between the number of flows and the total number of slots (i.e., $mk$ slots for $m$ buckets).}
\label{fig:exp-pure-percent-stat}
\end{figure}

Another issue we address is the solvability of the linear systems used in flow extraction, as analyzed in Section~\ref{sect:full-rank-analysis}.  As our goal is to get the probability of successful extraction very close to $1$, for each $k$, we vary $L$ to see the effect on the solvability probability.  However, using large $L$ leads to larger computational overhead as described in the running time analysis.  Therefore, we have to carefully choose $L$ to balance the success probability and the running time.
Figure~\ref{fig:exp-solvability-vary-l} shows the results for $k=2,3,4$  This experiment guides our choice of $L$ in the next set of experiments.

\begin{figure}
\begin{center}
\includegraphics[width=0.4\textwidth]{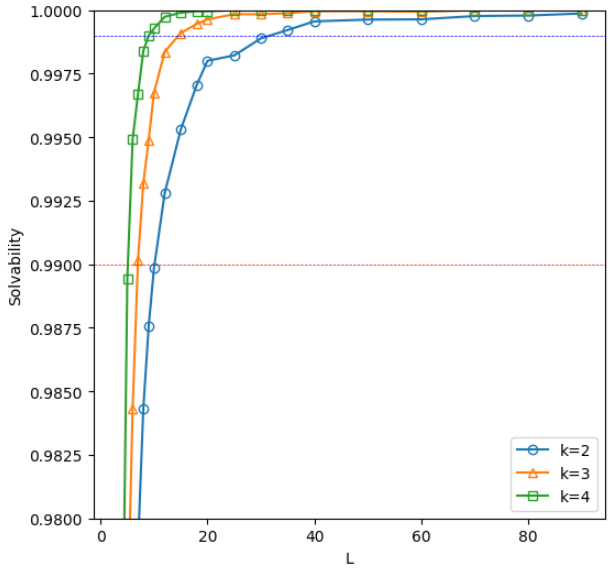}
\end{center}
\caption{The solvability probability for $k=2,3,4$, for varying values of $L$.}
\label{fig:exp-solvability-vary-l}
\end{figure}

\subsection{Successful extraction probability}

To demonstrate the advantages of using $k$ slots per bucket, we perform experiments to estimate the success extraction probabilities for the FermatSketch and for our new sketch design for $100$ victim flows.
We vary the number of slots $m'$ from $40$ to $60$, and for each value $m'$ we run FermatSketch with $m=m'$ buckets and for our sketch we set $k=2$ and $m=m'/2$ buckets; thus, both data structures use the same amount of space.  For our data structure, we vary $L$ from $6$ to $12$.
The results shown in Figure~\ref{fig:exp-success-vary-slots} are averaged over $2,000$ trials.

\begin{figure}
\begin{center}
\includegraphics[width=0.4\textwidth]{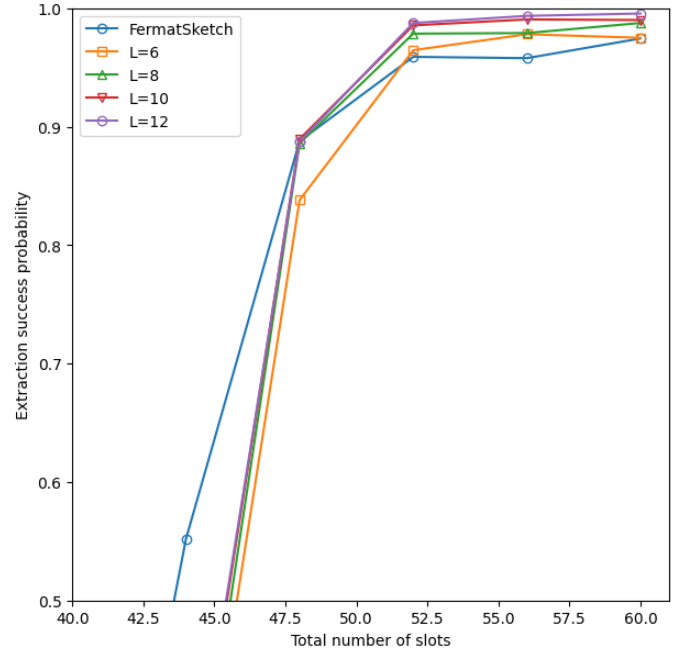}
\end{center}
\caption{The success extraction probability for $100$ victim flows for FermatSketch and our sketch with $k=2$ for varying values of $L\in\{6,8,10,12\}$.}
\label{fig:exp-success-vary-slots}
\end{figure}

\subsection{Required space for successful extraction}

In~\cite{ChameleMon23}, FermatSketch is used as a subroutine to the ChameleMon system.
Kaicheng~{\em et al.}~\cite{ChameleMon23} performed experiments to determine the required space for successful extraction of all flows with $99.9\%$ probability.  Because of limited computation resource, in this current work we perform experiments with $99\%$ success probability.  

We perform experiments to determine the required number of slots (i.e., the number of buckets in the FermatSketch and $k$ times the number of buckets in our case).  We run our experiments for $50, 100, 150, 200, 300,$ and $500$ victim flows.
To find the number of slots, we perform a binary search, and for each guess, we run $1,000$ trials to estimate the success probability.  Figure~\ref{fig:exp-required-slots} shows the results.
Our best choice of $L$ is $10$ and this gives us roughly $34.9\%$ reduction in space when the number of victim flows is small ($100$), but the improvement drops to $8\%$ when the number of flows increases.

We note that we use the provided the implementation of FermatSketch available at
{\tt\small https://github.com/ChameleMoncode/ChameleMon}.  The hash familiy used is MurmurHash3~\cite{senumaMmh3PythonExtension2025}.


\begin{figure}
\begin{center}
\includegraphics[width=0.4\textwidth]{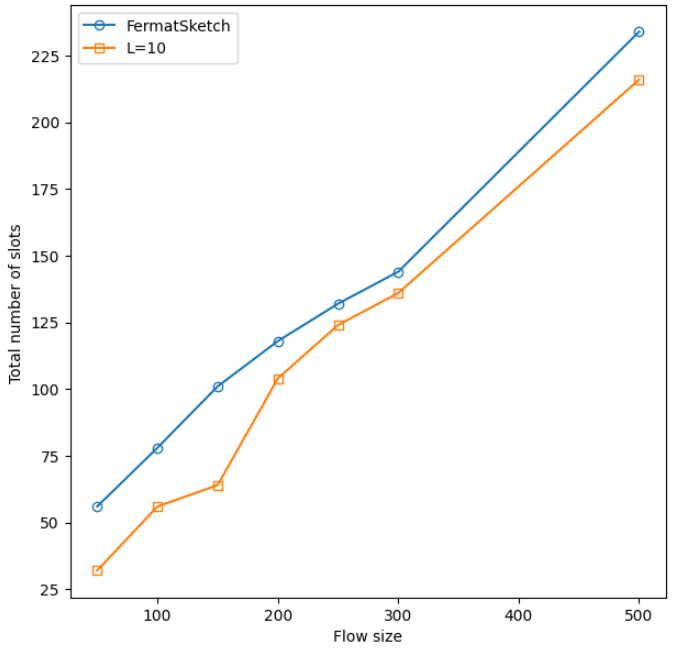}
\end{center}
\caption{The required number of slots for successful extraction for varying numbers of victim flows for FermatSketch and our sketch with $k=2$ and $L=10$.}
\label{fig:exp-required-slots}
\end{figure}

\bibliographystyle{IEEEtran}

\bibliography{sketches}

\end{document}